\newtheorem{theorem}{Theorem}
\def\mb{\mathbf}
\def\mc{\mathcal}
\begin{document}
\title{ Single-Bit Consensus with Finite-Time Convergence: Theory and Applications}
\author{Mohammadreza Doostmohammadian$^\ast$, \textit{Member, IEEE}

\thanks{
Mechanical Engineering Department, Semnan University, Semnan, Iran \texttt{doost@semnan.ac.ir}.}}
\maketitle

\begin{abstract}
	In this brief paper, a new consensus protocol based on the sign of innovations is proposed. Based on this protocol each agent only requires single-bit of information about its relative state to its neighboring  agents. This is significant in real-time applications, since it requires less computation and/or communication load on agents. Using Lyapunov stability theorem the convergence is proved for networks having a spanning tree. Further, the convergence is shown to be in finite-time, which is significant as compared to most asymptotic protocols in the literature. Time-variant network topologies are also considered in this paper, and final consensus value is derived for undirected networks. Applications of the proposed consensus protocol in (i) 2D/3D rendezvous task, (ii) distributed estimation,  (iii) distributed optimization, and (iv) formation control are considered and significance of applying this protocol is discussed. Numerical simulations are provided to compare the protocol with the existing protocols in the literature.
	
	\textit{Index Terms} -- Consensus, Finite-time convergence, Lyapunov stability theorem, Graph theory, Rendezvous, Formation control
\end{abstract}

\section{Introduction} \label{sec_intro}
\IEEEPARstart{C}{onsensus} protocols  \cite{olfati_rev,bauso2006,Scientia2011,wang2010finite,liu2015finite,zuo2014new} have recently found applications in many disciplines including control \cite{nuno-suff.ness} and signal processing \cite{sayin2013single,das2016consensus,jstsp,icassp13} literature. For consensus, a group of sensors/estimators/agents reach an agreement on state values, where state may represent different quantities and parameters of interest; for example, the state may be the velocity of Unmanned Aerial Vehicles (UAVs) while moving as a flock \cite{olfati_rev} or it may represent the temperature/wind-speed in geographical fields to be estimated \cite{das2016consensus}. In this scenario, a sensor network reaches consensus on measurements and/or state innovations to estimate the underlying system state. Also, in distributed detection agents share information to reach consensus on Log-Likelihood Ratio (LLR) \cite{zhu2018distributed}  or scalar-valued decision statistic \cite{sahu2017recursive}.  
One possible application is in distributed estimation \cite{das2016consensus,jstsp,icassp13,doostmohammadian2019cyber,globalsip14} with further application in spacecraft attitude estimation \cite{crassidis2003unscented,roy1991decentralized}.  
Further, in control literature, consensus finds applications in
distributed optimization \cite{nedic2014distributed}, flight formation \cite{kia2019tutorial,olfati2002distributed,padhi2014formation,zou2012distributed}, and multi-agent rendezvous \cite{cortes2006robust,ren2007information}. Of particular interest in  aerospace system applications, along with rendezvous in 3D space, are distributed  target tracking \cite{ennasr2016distributed} and formation control via a group of UAVs. For example, in distance-based formation control the consensus protocol is used to  stabilize the UAVs to form a specific geometric shape (see Section~\ref{sec_formation} for more information).

This paper proposes a single-bit consensus with ability to converge  in finite-time. The main feature of this consensus protocol is the single-bit information update. The consensus protocol is proposed based on the sign of difference between state values. This implies that only single-bit of information (the sign) is required to update the state of the agents. This reduces the amount of  processing/computation load and/or network communication load at each agent.\textit{ It should be emphasized  that in applications with real-time data processing where the computation and communication are required  in faster time scale, the less computation and/or communication load is a significant merit.} Since the protocol is nonlinear, a new Lyapunov function is proposed to prove consensus stability. Further, it is proved that for the proposed single-bit consensus protocol the Lyapunov function vanishes in finite-time, implying the finite-time convergence of the consensus protocol.  In \cite{wang2010finite,liu2015finite,zuo2014new} finite-time consensus protocols are proposed, however these protocols  impose large amount of computation on agents and are computationally less efficient than the proposed single-bit protocol in this paper. 
 
Quantized consensus \cite{zhu2018distributed,reisizadeh2018quantized,zhu2015quantized} is a related concept, where the agents reach consensus on quantized information with finite quantization levels for possibly unbounded data. In this direction \cite{zhu2015quantized} investigates a deterministic quantization based on alternating direction method of multipliers (ADMM). In \cite{zhu2018distributed} authors adopt a single-bit quantized consensus method for detection based on Bayesian criterion and Neyman-Pearson criterion.  In \cite{sahu2017recursive} the authors propose two \textit{consensus+innovation} type distributed detectors based on Generalized Likelihood-Ratio Test (GLRT) for composite hypothesis testing via a group of sensors.
Further, communication in multi-agent systems based on single-bit of information is also adopted in distributed detection \cite{ciuonzo2017generalized,ciuonzo2017distributed}. In these works a group of sensors are spatially distributed over a surveillance field to locally detect the existence of an uncooperative target and then communicate their single-bit decisions to a fusion center. The single-bit decisions are based on either the hybrid combination of GLRT and Bayesian estimation \cite{ciuonzo2017distributed} or Generalized-Rao Test \cite{ciuonzo2017generalized}. The fusion center combines  the received information based on the fusion rules and makes a global decision.

The main contributions of this paper are as follows: (i) the proposed protocol is based on single-bit of information, which makes it practical in real-time system applications. This is the most important feature of our proposed protocol. (ii) a new Lyapunov function is proposed to prove the stability and convergence of this nonlinear consensus protocol under certain connectivity condition. This Lyapunov function is irrespective of the consensus protocol dynamics, and therefore, might be used for stability analysis of other nonlinear consensus protocols in the literature \cite{bauso2006,wang2010finite}. (iii) the convergence time of the consensus is in finite-time while reducing the computation load on agents in contrast to most asymptotic consensus protocols in the literature \cite{olfati_rev,bauso2006}. It should be noted, although finite-time consensus protocols are already exist in the literature, to name a few \cite{wang2010finite,liu2015finite,zuo2014new}, their main drawback is their computational complexity as compared to the proposed protocol in this paper.

The rest of the paper is organized as follows: Section~\ref{sec_cons} formulates  the new consensus protocol. Section~\ref{sec_proof} provides the  proof of consensus and convergence based on Lyapunov stability. Section~\ref{sec_topology} provides the convergence condition in case of time-variant switching network topologies.  Section~\ref{sec_app} provides some applications of the proposed protocol. Section~\ref{sec_sim} presents simulation  to verify the results, and finally Section~\ref{sec_conc} concludes the paper.

\section{New Consensus Protocol} \label{sec_cons}
Assume a network of  $n$ agents with ability to process information and communicate with neighboring agents to share information. The communication network of agents is represented by graph $\mc{G}=(\mc{V},\mc{E}, W)$, where $\mc{V} = \{1,...,n\}$ represents the set of graph nodes (agents), $\mc{E}$ represents the set of edges (communication links)  defined as $\mc{E}=\{(i,j)|W_{ji} \neq 0\}$. Note that $W_{ji} \neq 0$  is the weight assigned to communication link from node $i$ to node $j$. Further, the neighborhood of agent $i$ is defined as $\mc{N}_i=\{j \in \mc{V}| W_{ij} \neq 0\}$

The state of each agent is represented by $x_i$ and  $\mb{x}=[x_1,...,x_n]^\top$ represents the state of all agents. The following consensus protocol is proposed to update the state of agent $i$ as:
\begin{eqnarray} \label{eq_cons}
\dot{x}_i = \sum_{j=1}^{n} W_{ij} \text{sgn}(x_j-x_i)= \sum_{j \in \mc{N}_i} W_{ij} \text{sgn}(x_j-x_i)
\end{eqnarray}
where $\text{sgn}()$ is the sign function defined as:
\begin{eqnarray}
\text{sgn}(x)=
\begin{cases}
1       & \quad \text{if } x>0\\
-1  & \quad \text{if } x<0\\
\end{cases}
\end{eqnarray}
where the $|x_j-x_i|$ represents the absolute value of $(x_j-x_i)$. \textit{Notice that consensus protocol \eqref{eq_cons}
only requires the sign of $(x_j-x_i)$, which can be defined by single-bit of information.}

Extending the scalar-state  protocol \eqref{eq_cons} to vector-state $\mb{x}$, the updating law is as follows:
\begin{eqnarray} \label{eq_cons_vect}
\dot{\mb{x}}_i = \sum_{j \in \mc{N}_i} W_{ij} \frac{\mb{x}_j-\mb{x}_i}{\|\mb{x}_j-\mb{x}_i\|}
\end{eqnarray}
where  $\|.\|$ represents the Euclidean norm of the vector. In this case, the agents use the weighted summation of the \textit{unit vector} of its state relative to its neighbors' states for control update.

In both cases of scalar-valued consensus \eqref{eq_cons} and vector-state consensus \eqref{eq_cons_vect} the amount of information exchange and/or the computation on agents is less than the common consensus protocols in the literature \cite{olfati_rev,bauso2006,wang2010finite,liu2015finite,zuo2014new}. In protocol \eqref{eq_cons} only the sign of relative states $x_j-x_i$ is needed to be exchanged among agents and to be computed for state update.  Similarly, for protocol \eqref{eq_cons_vect} only unit vector in the direction of relative state vector is needed for computation and communication. This is the key feature reducing the amount of information exchange and/or computational load at each agent and improving the real-time feasibility of the protocol.

Now the question is how the agents exchange information on the sign function or unit vector. This depends on the nature of agents' states. Assume the state represents the position or velocity of the agent. For consensus on scalar-valued velocities, for example in flocks or vehicle platooning \cite{pirani2018cooperative}, following protocol \eqref{eq_cons} each agent only needs to know if the other agent moves faster or slower, without needing to exactly know the velocity of the neighboring agent by communication or sensing the exact velocity. For consensus on position vectors as in \eqref{eq_cons_vect} each agent uses the unit vector in the direction of relative positions of neighboring agents and, in contrast to protocols in the literature, there is no need to communicate exact positions of agents. This can be done, for example, by omni-directional cameras on agents without need to communicate exact locations (See more explanations on this in Section~\ref{sec_app_rend}). Then, the agents update their state (position) based on the weighted summation of these unit vectors. In scalar-state case, the state of agent $i$ is updated based on the weighted summation of $1$s and $-1$s, where $1$ and $-1$ are respectively assigned to the case $x_j>x_i$ and $x_j<x_i$. The agents' states get updated and evolve in time until all the agents have the same state and reach consensus. It should be mentioned that this protocol does not fail for static states. In other words, when the weighted sum of $1$s and $-1$s or the unit vectors is zero, the state of agent does not change. The state remains unchanged until the summation changes due to change in the state of neighboring agents, or the system reaches consensus and the state of all agents remains unchanged and equal.

One drawback of the given protocol \eqref{eq_cons}, and in general any non-Lipschitz protocol, is the sensitivity to time-delay. In case there is time-delay in the information exchange among agents, undesirable oscillations in agents' states may occur which is known as \textit{chattering phenomenon}. This is a side-effect of using non-Lipschitz function and is prevalent in  finite-time convergent consensus protocols as in \cite{wang2010finite,liu2015finite,zuo2014new} and also in \textit{Sliding Mode Control (SMC)} \cite{nonlin}. One solution to avoid such phenomenon is to use smooth Lipschitz functions around the equilibrium, for example \textit{saturation function},
\begin{eqnarray}
	\text{sat}(x)=
	\begin{cases}
	1   & \quad \text{if } x>a\\
	x   & \quad \text{if} -a<x<a\\
	-1  & \quad \text{if } x<-a
	\end{cases}
\end{eqnarray}	
This is proposed in SMC  as described in \cite{nonlin}. In such case the agents' states reach a convergence ball (of radius $a$) around the equilibrium in finite-time, however, the convergence inside this convergence ball is asymptotic. In terms of information exchange, the agents share single-bit of information outside this convergence ball, while inside this ball they need to share full-state information. For example, when the state represents location, in case agents' states get closer to each-other the agents are able to share more information, while in distant states only single-bit of information is exchanged. It should be mentioned, replacing the non-Lipschitz function with a Lipschitz equivalent only alleviates the effect of time-delay and does not completely vanish the chattering phenomenon.  

\section{Proof of Finite-Time Convergence} \label{sec_proof}
Here, we answer the following question: what is the connectivity requirement on the network such that state of all agents reach the same value? i.e., $\mb{x}^*=\alpha \mb{1}$ is the stable equilibrium point of the protocol \eqref{eq_cons} under what connectivity condition. To answer this, we introduce the concept of \textit{spanning tree} in directed graphs. Define a directed tree as a directed graph where every node (except the \textit{root} node) has exactly one incoming edge. The root node (also referred as \textit{leader} node) has no incoming edge. A graph has a spanning tree if it contains a directed tree as a subgraph that spans all nodes.

\begin{theorem}
	Protocol \eqref{eq_cons} reaches consensus  if and only if the communication network $\mc{G}$ has a spanning tree.
\end{theorem}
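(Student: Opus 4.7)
My plan is to prove the two directions separately. For sufficiency (spanning tree implies consensus), I would construct a Lyapunov function that exploits the ordered structure of the states under the sign-based dynamics. A natural candidate is the disagreement functional $V(\mb{x}) = \max_i x_i - \min_i x_i$, which is positive definite on the quotient space modulo the consensus subspace $\{\alpha \mb{1} : \alpha \in \mbb{R}\}$ and vanishes precisely at consensus. I would then compute the upper right Dini derivative of $V$ along trajectories of \eqref{eq_cons}: any agent $i^\ast$ attaining the current maximum receives only non-positive sign contributions, since $\text{sgn}(x_j - x_{i^\ast}) \leq 0$ for every neighbor $j$, so the maximum is non-increasing; symmetrically, the minimum is non-decreasing. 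The spanning-tree assumption is what guarantees that the root node's influence propagates down the tree, forcing strict decrease of $V$ until consensus is reached. Combining this with a uniform bound of the form $D^+ V \leq -c$ for some $c > 0$ whenever $V > 0$, I expect to obtain not just convergence but finite-time convergence, which aligns with the stated goal of the paper.

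For the necessity direction (no spanning tree implies no consensus), I would argue by contraposition using a graph-theoretic decomposition. If $\mc{G}$ lacks a spanning tree, then the condensation of $\mc{G}$ into strongly connected components contains at least two source components $\mc{C}_1$ and $\mc{C}_2$ with no incoming edges from outside either one. Agents in $\mc{C}_1$ receive no information from $\mc{C}_2$ and vice versa, so the two subsystems evolve independently under \eqref{eq_cons}. By choosing initial conditions so that states in $\mc{C}_1$ are uniformly larger than those in $\mc{C}_2$ (and invoking the within-component convergence guaranteed by the sufficiency direction applied to each source component), the two groups settle at distinct limit values, contradicting global consensus.

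The main obstacle will be handling the nonsmoothness of the sign function rigorously. Standard Lyapunov theorems assume $C^1$ candidates and classical time derivatives, whereas here I must work with Dini derivatives or Filippov solutions, and the chosen $V(\mb{x}) = \max_i x_i - \min_i x_i$ is only piecewise differentiable. The case where several agents simultaneously tie for the maximum (or minimum) requires particular care: the Dini derivative then involves a minimum over all maximizing indices, and I must verify that the spanning-tree argument still forces at least one of these maximizing agents to have a strictly negative drift. A secondary subtlety is that, unlike the classical linear Laplacian protocol, the consensus value $\alpha$ here is generally not the arithmetic mean of the initial states and depends on the weights; this value is not needed for the theorem itself, but it must not be assumed in constructing $V$.
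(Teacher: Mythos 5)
Your overall strategy is close in spirit to the paper's: the paper proves sufficiency by showing that no non-consensus state can be an equilibrium (pick an agent attaining an extreme value and use the spanning tree to produce a nonzero drift), and proves necessity by exhibiting either two roots or two components that never exchange information. Your condensation-into-source-components argument is a cleaner formalization of exactly that necessity argument and is sound, and your candidate $V(\mb{x})=\max_i x_i-\min_i x_i$ is precisely the function the paper uses (in its separate convergence theorem) to argue finite-time convergence.

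The gap is in the sufficiency half, specifically the claimed uniform bound $D^+V\le -c$ whenever $V>0$; this bound is false pointwise. First, your quantifier on the tied indices goes the wrong way: $D^+\max_i x_i=\max_{i\in\mathrm{argmax}}\dot{x}_i$ and $D^+\min_i x_i=\min_{i\in\mathrm{argmin}}\dot{x}_i$, so to make the maximum strictly decrease you need \emph{every} agent currently attaining the maximum to have strictly negative drift, not ``at least one'' as you write; a single tied maximizer all of whose in-neighbors also sit at the maximum has zero drift and blocks the decrease. Second, and independently of ties, the root of the spanning tree has no in-neighbors, so if the root currently holds the maximum then that maximum does not move at all. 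For example, on the directed path $1\to 2\to 3$ with $x=(1,0,0)$ one gets $\dot{x}_1=0$ and $\dot{x}_3=0$, hence $D^+V=0$ while $V=1>0$. Consensus still holds, but the argument must be an averaged or propagation one --- showing that $V$ drops by a definite amount over each sufficiently long time interval as the extremal level sets are eroded along the tree --- rather than a pointwise strict Lyapunov decrease. (The paper sidesteps this in its proof of the present theorem by only characterizing equilibria, and its separate convergence proof glosses over the same tie and root cases.)
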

\begin{proof}
	Contradiction is used for the proof.
	\textit{Sufficiency:} if the graph $\mc{G}$ has a spanning tree, we prove that the equilibrium point of \eqref{eq_cons} is in the form $\mb{x}^*=\alpha \mb{1}$. As a contradiction assume that $\mb{x}^* \neq \alpha \mb{1}$. Therefore, consider the agent $i$ with maximum (or minimum) state. Since the network has a spanning tree there is at least one agent $j$ in the neighborhood of $i$ (or agent $i$ is in the neighborhood of agent $j$) \cite{diestel2017graph}. Therefore,$\dot{x}_i =  \sum_{j \in \mc{N}_i} W_{ij} \text{sgn}(x_j-x_i)<0$
	or $\dot{x}_i =  \sum_{i \in \mc{N}_j} W_{ji} \text{sgn}(x_i-x_j)>0$
    which both cases contradict the definition of equilibrium point.	
	\textit{Necessity:} If no spanning tree is contained in the communication graph $\mc{G}$, it implies that there is no information flow (directed path) at least among two agents. In graph theory, this implies that either the graph has at least two roots or the graph contains at least two unconnected components \cite{diestel2017graph}. In first case, note that $\dot{x}_{root}=0$ since it has no incoming information ($\mc{N}_{root}=\emptyset$ \cite{diestel2017graph}). Therefore, the states of two root agents remain the same initial values without updating, and these two agents never reach consensus. In the second case, since there is no information flow (directed path) between two components, each component reaches a consensus value which  in general  differs from the consensus value of the other component. Therefore, for both cases the consensus may not be reached.
\end{proof}	
\begin{theorem} \label{thm_stab}
	In protocol \eqref{eq_cons}, having a spanning tree in $\mc{G}$, states of all agents converge to stable consensus equilibrium point $\mb{x}^*=\alpha \mb{1}$ in finite-time.
\end{theorem}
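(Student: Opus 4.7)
The plan is to construct a nonsmooth Lyapunov function that measures total network disagreement and then show that its time derivative is bounded away from zero whenever the system is off consensus, which will force finite-time decay to the consensus manifold. The candidate I would propose is
\begin{equation*}
V(\mb{x}) \;=\; \tfrac{1}{2}\sum_{i=1}^{n}\sum_{j=1}^{n} W_{ij}\,\bigl|x_i - x_j\bigr|,
\end{equation*}
i.e., the weighted sum of pairwise state gaps along the edges of $\mc{G}$. Clearly $V \geq 0$. To check that $V = 0$ forces $\mb{x} = \alpha\mb{1}$, note that $V = 0$ implies $x_i = x_j$ for every edge, and the spanning-tree hypothesis lets me propagate equality from the root along the tree to every node. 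So $V$ vanishes exactly on the consensus set, which already includes the result of Theorem~1.

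Next I would differentiate $V$ along trajectories of \eqref{eq_cons}. Away from the switching surfaces $\{x_i = x_j\}$, direct calculation together with the symmetry $W_{ij} = W_{ji}$ (treating the undirected/symmetric-weight case first) yields
\begin{equation*}
\dot V \;=\; \sum_{i=1}^{n}\dot{x}_i\sum_{j=1}^{n}W_{ij}\,\text{sgn}(x_i-x_j) \;=\; -\sum_{i=1}^{n}\dot{x}_i^{\,2} \;=\; -\|\dot{\mb{x}}\|^{2}\le 0,
\end{equation*}
where the second equality uses that $\dot x_i = -\sum_{j} W_{ij}\,\text{sgn}(x_i-x_j)$ by \eqref{eq_cons}. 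On the switching surfaces this identity is to be read in the Filippov / Clarke generalized-gradient sense and the right-hand side remains non-positive.

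To upgrade this to finite-time convergence I need a uniform lower bound on $\|\dot{\mb{x}}\|$ whenever $V>0$. Here I would exploit the spanning-tree hypothesis directly: if $V(\mb{x})>0$ then $M = \arg\max_i x_i$ is a proper subset of $\mc{V}$, so by the same argument used in the proof of Theorem~1 there exists $i^{\star}\in M$ with a neighbor $j\notin M$ satisfying $x_j<x_{i^{\star}}$. Then
\begin{equation*}
\dot x_{i^{\star}} \;=\; \sum_{j\in\mc{N}_{i^{\star}}} W_{i^{\star}j}\,\text{sgn}(x_j-x_{i^{\star}}) \;\le\; -w_{\min},
\qquad w_{\min}\;=\;\min_{(i,j)\in\mc{E}} W_{ij}\;>\;0,
\end{equation*}
so $\|\dot{\mb{x}}\|^{2} \geq w_{\min}^{2}$, and combining with the previous step gives $\dot V \le -w_{\min}^{2}$ whenever $V>0$. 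Integrating yields $V(t) \le V(0) - w_{\min}^{2}\, t$, hence $V$ reaches $0$ — and therefore $\mb{x}$ reaches the consensus set — in finite time $T \le V(0)/w_{\min}^{2}$. Lyapunov stability of $\mb{x}^{\ast}=\alpha\mb{1}$ follows from the monotone non-increase of $V$.

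The main obstacle I anticipate is rigour at the discontinuities: because $\text{sgn}(\cdot)$ is not Lipschitz, trajectories must be interpreted in the Filippov sense and the derivative computation validated using nonsmooth Lyapunov machinery (e.g.\ the Shevitz--Paden extension of LaSalle). A secondary obstacle is extending the clean identity $\dot V = -\|\dot{\mb{x}}\|^{2}$ beyond the symmetric-weight setting: for a genuinely directed spanning-tree network one likely needs to reweight $V$ by the left Perron vector associated with the graph Laplacian, or bypass the $\|\dot{\mb{x}}\|^{2}$ identity altogether and use only the direct max-agent estimate $\dot x_{i^{\star}}\le -w_{\min}$ together with the nonsmooth Lyapunov function $\widetilde V(\mb{x}) = \max_i x_i - \min_i x_i$, whose upper Dini derivative is likewise bounded above by $-w_{\min}$ under the spanning-tree hypothesis.
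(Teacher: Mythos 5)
Your primary route is genuinely different from the paper's. The paper works with the range function $V=x_{\max}-x_{\min}$ and argues directly that the maximal agent's derivative is at most $-\min\{W\}$ and the minimal agent's is at least $\min\{W\}$, giving $\dot V\le-2\min\{W\}$ and the explicit bound $t_{cons}\le (x_{\max}(0)-x_{\min}(0))/(2\min\{W\})$. You instead use the edge-disagreement functional $V=\tfrac12\sum_{i,j}W_{ij}|x_i-x_j|$ and the dissipation identity $\dot V=-\|\dot{\mb{x}}\|^2$, then recover a uniform decrement from the max-agent estimate. Your identity is elegant and yields a cleaner energy interpretation, but it genuinely requires $W_{ij}=W_{ji}$: for a non-symmetric $W$ the two halves of $\partial V/\partial x_i$ no longer combine into $-\dot x_i$, and the cross terms do not cancel. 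Since the theorem is stated for directed graphs possessing a spanning tree, your main argument as written proves only the undirected/symmetric special case; the remedy you name in your final sentence --- switching to $\widetilde V=\max_i x_i-\min_i x_i$ with a Dini-derivative bound --- \emph{is} the paper's proof, so for the general directed case you have not produced an independent argument. The Perron-reweighting alternative you float is not obviously workable here, because the sign nonlinearity destroys the linear-algebraic cancellation that makes that trick succeed for linear consensus.

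One further caveat applies to both your max-agent step and the paper's: the claim that the maximal agent satisfies $\dot x_{i^\star}\le -w_{\min}$ presumes that $i^\star$ has an \emph{in}-neighbor with strictly smaller state. In a directed spanning tree the agent holding the maximum may be the root (no incoming edges), or all of its in-neighbors may themselves hold the maximum value, in which case $\dot x_{i^\star}=0$ at that instant and the asserted uniform decrement fails pointwise (consensus still propagates, but only after the tied agents are pulled down through the tree). In your undirected setting this is harmless --- connectivity guarantees some maximizer has a strictly smaller neighbor, which is exactly what your argument uses --- but if you fall back on the range function for the directed case you inherit this soft spot from the paper, and a fully rigorous treatment needs the nonsmooth (Filippov/Dini) machinery you correctly flag, together with an argument that handles the root and tied-maximum configurations.
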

\begin{proof}
	We prove the theorem using Lyapunov stability theorem. Define the following Lyapunov function:
	\begin{eqnarray} \label{eq_V}
	V =  x_{\max}-x_{\min}
	\end{eqnarray}	
	where $x_{\max}$ and $x_{\min}$ are respectively the maximum state value and the minimum state value of the agents, i.e. $
	x_{\max} = \max\{x_i, i \in \{1,\dots,n\}\}$ and
	$x_{\min} = \min\{x_i, i \in \{1,\dots,n\}\}$	
	In fact, $x_{\max}$ and $x_{\min}$ are time-dependent, i.e.,  the agent possessing the max/min value differs at every time instant.
	Notice that $x_{\max}=x_{\min}$ implies that the max value and min value of all agents are equal and therefore the consensus equilibrium point  $\mb{x}^*=\alpha \mb{1}$ is reached. Note that the Lyapunov function $V$ is continuous, regular, and Lipschitz. Also, $V$ is globally positive definite, i.e. $V>0 \Leftrightarrow \mb{x}^* \neq \alpha \mb{1}$ and $V=0 \Leftrightarrow \mb{x}^*=\alpha \mb{1}$. Further, Lyapunov function $V$ is radially unbounded, i.e. $V \rightarrow \infty$ as $|x_i| \rightarrow \infty$. For convergence and stability we  prove that $\dot{V}$ is negative definite.
	\begin{eqnarray}
	\dot{V} =  \dot{x}_{\max}-\dot{x}_{\min}
	\end{eqnarray}
	\begin{eqnarray}
	\dot{x}_{\max}= \sum_{j \in \mc{N}_{\max}} W_{\max j} \text{sgn}(x_j-x_{\max}) \\
	\dot{x}_{\min}= \sum_{j \in \mc{N}_{\min}} W_{\min j} \text{sgn}(x_j-x_{\min})
	\end{eqnarray}	
	Define $\min\{W\}$ as the minimum positive consensus weight of agents in weight matrix $W$. Since the weight matrix $W$ might be time-variant, the term $\min\{W\}$ might be assigned to different agents over time.  We have,
	\begin{eqnarray}
	\dot{x}_{\max}=- \sum_{j \in \mc{N}_{\max}} W_{\max j} \leq -\min\{W\} \\
	\dot{x}_{\min}= \sum_{j \in \mc{N}_{\min}} W_{\min j}\geq \min\{W\}
	\end{eqnarray}
	Therefore, $	\dot{V} \leq -2\min\{W\}$
	This implies that $\dot{V}$	is globally negative definite, i.e. $\dot{V}<0 \Leftrightarrow \mb{x}^* \neq \alpha \mb{1}$ and $\dot{V}=0 \Leftrightarrow \mb{x}^*=\alpha \mb{1}$. Therefore, based on Lyapunov stability theorem \cite{nonlin} the consensus point is globally stable equilibrium of protocol \eqref{eq_cons}.
	
	Further let $t_{cons}$ be the convergence time of the consensus protocol \eqref{eq_cons}.
	\begin{eqnarray}
	\int_{0}^{t_{cons}}\dot{V} \leq -\int_{0}^{t_{cons}}2\min\{W\} \\
	V(t_{cons})-V(0) \leq -2\min\{W\}t_{cons} \\
	{t_{cons}}\leq  \frac{x_{\max}(0)-x_{\min}(0)}{2\min\{W\}}
	\end{eqnarray}
	representing finite-time upperbound on conevergence.		
\end{proof}
One point to be noted in the proof of Theorem~\ref{thm_stab} is on the notation of $x_{max}$, $x_{min}$, $\mc{N}_{\max}$, $\mc{N}_{\min}$, and  $W_{\max j}$. For these terms max/min values do not necessarily concern a single agent over time, but these max/min values concern all agents. In other words, the agent possessing the min/max value, its neighbors, and the associated weights change over time, and therefore, the time-evolution of the Lyapunov function \eqref{eq_V} is not necessarily smooth.

It should be noted, the proof of the stability and convergence for the vector-state protocol \eqref{eq_cons_vect} follows similar Lyapunov analysis. In vector-state problem, the Lyapunov function can be considered as the perimeter of the convex hull containing the vector state of the agents, or the circumference of the smallest covering ball/circle enclosing the vector states. Following similar analysis as in above, it can be proved that the Lyapunov function is always decreasing under protocol \eqref{eq_cons_vect}.

\section{Time-Variant Network Topologies} \label{sec_topology}
Note that the consensus network of agents may change in time due to failure  or addition of new links among agents. This may particularly happen in network of mobile agents where the communication range of agents are limited or in real world applications due to obstacles. The objective of this section is to determine the conditions on changing network topology $\mc{G}$ for which the consensus can be reached. The main point in this section is that our proposed Lyapunov function does not depend on the graph topology $\mc{G}$.

\begin{theorem} \label{thm_switch}
	Consider the network topology of agents to be selected from the finite set of graphs $\Gamma_m = \{\mc{G}_1,\mc{G}_2,...,\mc{G}_m\}$, where $\mc{G}_d=(\mc{V},\mc{E}_d, W_d), d \in \{1,...,m\}$. Agents reach consensus under protocol \eqref{eq_cons} if for a sufficient sequence of bounded non-overlapping time-intervals $[t_k,t_k+l_k],~k=1,2,...$, the combination of network topologies across each time-interval contain a spanning-tree.
\end{theorem}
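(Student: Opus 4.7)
My plan is to reuse the Lyapunov function $V = x_{\max} - x_{\min}$ from Theorem~\ref{thm_stab} and exploit the fact, already emphasized by the author, that $V$ depends only on agent states, not on the currently active topology. The first step is to verify that $V$ is non-increasing for every $t \geq 0$, regardless of which $\mc{G}_d \in \Gamma_m$ is active: for any agent $i$ realizing $x_i(t) = x_{\max}(t)$, every in-neighbor $j$ in the active graph satisfies $x_j \leq x_{\max}$, so $\text{sgn}(x_j - x_i) \in \{-1, 0\}$ and $\dot x_i \leq 0$; symmetrically $\dot x_{\min} \geq 0$. Using the upper Dini derivative to handle switching times and changes in the argmax/argmin, this yields $D^+ V(t) \leq 0$ everywhere, so $V$ cannot grow across topology switches.

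The second step is to show that the union-spanning-tree condition on each $[t_k, t_k + l_k]$ forces a strictly positive, quantifiable drop in $V$. I would trace information flow along a spanning tree $T$ of the union graph $\mc{G}^{\cup}_k$ formed by all $\mc{G}_d$ active on $[t_k, t_k + l_k]$. Writing $r$ for the root, and noting that $r$ has in-degree zero in every instantaneous graph on the interval (since its in-degree is zero in $\mc{G}^{\cup}_k$), the value $x_r(t) = v_r$ is constant across the interval. Every descendant $c$ of $r$ has an incoming tree edge $(p, c)$ that becomes active at some instant $t^{\ast} \in [t_k, t_k + l_k]$; at that instant the sign contribution from $p$ drives $x_c$ toward $x_p$ unless $x_p(t^{\ast}) = x_c(t^{\ast})$ already. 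Propagating this down the levels of $T$, together with the monotonicity of $x_{\max}$ and $x_{\min}$, forces either $x_{\max}$ to strictly decrease or $x_{\min}$ to strictly increase at some instant in the interval---unless all agents already share the value $v_r$, in which case $V = 0$. The Theorem~\ref{thm_stab} bounds $\dot x_{\max} \leq -\min\{W\}$ and $\dot x_{\min} \geq \min\{W\}$ then quantify the rate at such instants.

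The third step is to telescope. If the ``sufficient sequence'' hypothesis is read as guaranteeing a uniform lower bound $\eta > 0$ on the total duration within each $[t_k, t_k + l_k]$ on which the strict decrease above is in force, then $V(t_k + l_k) \leq V(t_k) - 2 \eta \min\{W\}$. Summing over $k$ yields a finite upper bound on the consensus time, $t_{\mathrm{cons}} \leq V(0) / (2 \eta \min\{W\})$, in direct analogy with the bound derived in Theorem~\ref{thm_stab}.

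The principal obstacle is the second step: the sets of max- and min-valued agents can change arbitrarily often inside an interval, individual $x_i$ need not be monotone (only the envelope $[x_{\min}, x_{\max}]$ is), and the spanning tree lives in the union graph rather than in any instantaneous one. Making the propagation argument rigorous requires Dini-derivative calculus together with a careful accounting of when each tree edge is active and of the endpoint values at those times; the non-Lipschitz sign dynamics and the possibility of sliding behavior at ties add further delicacy. Once this topological-analytic bridge is in place, the remaining steps are essentially a direct transcription of the bookkeeping already carried out in Theorem~\ref{thm_stab}.
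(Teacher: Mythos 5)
Your overall skeleton is exactly the paper's: reuse the Lyapunov function $V=x_{\max}-x_{\min}$ from Theorem~\ref{thm_stab}, observe that it is independent of the active topology, extract a strict decrease of $V$ from each interval $[t_k,t_k+l_k]$ whose union of topologies contains a spanning tree, and conclude that finitely many such intervals drive $V$ to zero. Your first step (that $V$ is non-increasing at \emph{all} times, under any $\mc{G}_d$, via a Dini-derivative argument on the max/min envelopes) is correct and is a useful point the paper leaves implicit. Where you diverge is the key second step. The paper argues it directly and tersely: since the union over $[t_k,t_k+l_k]$ contains a spanning tree, the agent holding $x_{\max}$ (or $x_{\min}$) participates in an active neighbor relation on some sub-domain of the interval, on which $\dot{x}_{\max}\leq -\min\{W\}$ (resp.\ $\dot{x}_{\min}\geq \min\{W\}$), hence $\dot{V}\leq -2\min\{W\}$ there; summing over intervals finishes the proof. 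You instead attempt a propagation argument down a spanning tree of the union graph, which is a genuinely different mechanism.

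That replacement contains a concrete error and is, by your own admission, incomplete. You anchor the propagation on the claim that the root $r$ of the union spanning tree has in-degree zero in every instantaneous graph ``since its in-degree is zero in $\mc{G}^{\cup}_k$,'' so that $x_r$ is constant over the interval. But the root of a spanning \emph{tree} only has in-degree zero within the tree; the union graph may contain additional edges into $r$. Indeed, a strongly connected graph (such as $\mc{G}_3$ in the paper's simulation) contains a spanning tree while every node has positive in-degree, so no choice of root yields an agent with constant state, and the entire downstream propagation loses its fixed reference value. Since you also explicitly defer the ``topological-analytic bridge'' that would make the level-by-level propagation rigorous (changing argmax/argmin sets, non-monotone individual trajectories, sliding at ties), the proposal as written does not establish the strictly positive per-interval drop in $V$ that both your argument and the paper's require; the first and third steps are fine but rest on this missing middle.
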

\begin{proof}
	Again consider the proposed positive definite Lyapunov function $V =  x_{\max}-x_{\min}$ which is independent of network topology. The proof is similar to the proof of Theorem~\ref{thm_stab}. Note that in every time-interval $[t_k,t_k+l_k]$ the combination of graph topologies contain a spanning tree. Therefore, the agent with $x_{\max}$ (or $x_{\min}$) has at least one neighbor $j$ or is a neighbor of agent $j$ in a sub-domain of the interval (not necessarily in the entire time-interval). This implies that for this time domain $\dot{x}_{\max}\leq -min\{W\}$ (or $\dot{x}_{\min}\geq min\{W\}$).  Consequently, following the statement of the proof in Theorem~\ref{thm_stab}, in this time domain $\dot{V}$ is negative definite and more precisely $\dot{V} \leq -2\min\{W\} $. This implies that after sufficient (finite) number of time-intervals the Lyapunov function reaches $0$ and consensus is achieved.
\end{proof}


\section{Applications} \label{sec_app}
\subsection{Rendezvous in 2D/3D space} \label{sec_app_rend}
In rendezvous problem \cite{cortes2006robust,ren2007information}, the goal is to devise control strategies on a group of mobile agents to eventually move them to a single location
The state of each agent is its position in 2D/3D space, and the aim is to reach a consensus on the position.
In words, each agent applies the weighted summation of the \textit{unit vector} relative to its neighbors'  positions to update its own location. In fact, using protocol \eqref{eq_cons_vect}, every agent only needs to be informed of the direction of the neighboring agent's relative position vector, but not its magnitude.
This is significant as by using, for example, omni-directional cameras \cite{kato2014localization} each agent finds information on the relative direction towards its neighbor's position, and there is no need to communicate the exact location of the agents. This approach can be implemented, for example, to improve the experimental results  in \cite{ren2008experimental} in terms of \textit{real-time} communication and computation; each robot only needs to find the direction that the neighboring robots are located using an omni-directional camera and there is no need to communicate its position to the neighboring robots. This, further, can be extended to the 3D case to implement the rendezvous task over a network of UAVs.

\subsection{Distributed estimation}
In single time-scale distributed estimation \cite{das2016consensus,jstsp,icassp13,doostmohammadian2019cyber} the idea is to track the state of the dynamical system via a network of agents. 
Consider a noisy system monitored by noise-corrupted measurements,
\begin{eqnarray}\label{eq_sys1}
\mb{x}^{k+1} = A\mb{x}^k + \mb{v}^k,\\ \label{eq_H_i}
\mb{y}^k_i = H_i\mb{x}^k + \mb{r}^k_i.
\end{eqnarray}
In the above formulation, $A$ is the dynamical system matrix, $k$ is the time-step, $y^k_i$ is the measurement of agent $i$ at time-step $k$, and ${v}^k$ and ${r}^k_i$ are the noise terms. Following the distributed estimation protocol in \cite{jstsp,jstsp14,acc13,doostmohammadian2017recovery,asilomar11}, the following  protocol based on the consensus-update law \eqref{eq_cons_vect} is considered:
%

\begin{eqnarray} \nonumber
\tilde{\mb{x}}_i^{k} &=& \sum_{j\in\mathcal{N}_\beta(i)} W_{ij}A\widehat{\mb{x}}_j^{k-1} \\  \label{eq_est}
\widehat{\mb{x}}_i^{k} &=& \tilde{\mb{x}}_i^{k} + K_i^k \sum_{j\in \mc{N}_\alpha(i)} H_j^\top\left(\frac{\mb{y}^k_j- H_j \tilde{\mb{x}}_i^{k}}{\|\mb{y}^k_j- H_j \tilde{\mb{x}}_i^{k}\|}\right).
\end{eqnarray}

where,
$\mc{N}_\alpha(i)$ and $\mc{N}_\beta(i)$ represent some specific neighborhoods of agent $i$, and $K^k_i$ is the estimation gain at agent $i$ (see the previous works by author \cite{jstsp,jstsp14,acc13,doostmohammadian2017recovery,asilomar11} for more information). One interesting extension to distributed estimation protocol \eqref{eq_est} may be considered for the case of sensor failure and countermeasures to recover for that (see more information in \cite{doostmohammadian2017recovery}.) 

One application of the above distributed estimation scenario is in target tracking based on time-difference-of-arrival (TDOA) via a group of UAVs \cite{ennasr2016distributed}. In this framework a group of UAVs  estimate the location of a mobile target based on the time-difference-of-arrival of some beacon signal received by the UAVs. Each UAV shares the TDOA-based information with its neighboring UAVs, and also shares the estimated position of the target. Then,  by consensus averaging of the position estimates and information fusion on the state-predictions, each UAV in the network can localize the source, and the group tracks the location of the mobile target.

\subsection{Distributed optimization}
In distributed optimization problem \cite{nedic2014distributed} the objective is to distributively solve the following optimization problem via a multi-agent network,
\begin{equation} \label{eq_min}
\min
\limits_{x \in \mathbb{R}} ~~ f(x) := \frac{1}{n} \sum_{i=1}^{N} f_i(x)
\end{equation}
where $f(x)$ is the continuously convex objective function and $f_i(x)$ is the local objective function only known by agent $i$. The distributed  gradient-descent-based solution for this problem applying the consensus protocol \eqref{eq_cons} is as follows,
\begin{equation} \label{eq_min2}
x^{k+1}_i = x^k_i - \alpha^k \nabla f_i(x_i^k) + \gamma \alpha^k \sum_{j \in \mc{N}_i} W_{ij} \text{sgn}(x^k_j-x^k_i)
\end{equation}
where $\gamma>0$ is a scalar weight and $\alpha^k$ is the optimization step-size at time $k$.

The protocol \eqref{eq_min2}  reduces the amount of information processing and computation load on agents as compared to \cite{nedic2014distributed}, particularly in large scale applications. Note that  each agent only relies on the sign of relative state of neighboring agents.
In the same line of research, \cite{reisizadeh2018quantized} considers the case that the communicated decision information among agents are quantized in order to alleviate the communication bottleneck in distributed optimization. The authors propose a Quantized Decentralized Gradient Descent (QDGD) and prove the convergence of their protocol for strongly convex and smooth local cost functions.

\subsection{Formation control of UAVs} \label{sec_formation}
One application within the aerospace systems is cooperative control of UAV formation \cite{kia2019tutorial,olfati2002distributed,padhi2014formation,zou2012distributed}. A group of autonomous UAVs form a pre-specified formation setup (e.g. a star-shaped formation) in which two neighboring UAVs $i$ and $j$ are in the distance $d_{ij}$ based on their formation shape. The control input to the group of UAVs are designed such that they can move along on this formation without collision in order to, for example, track a mobile target or avoid an obstacle. One example of such formation control scenario is given as following.
Define $\mb{x}_i$ as the position of UAV (or agent) $i$ in 3D space evolving in time as:
\begin{eqnarray} \nonumber
\dot{\mb{x}}_i &=& \mb{p}_i \\\label{eq_formation}
 \dot{\mb{p}}_i &=& \sum_{j \in \mc{N}_i} W_{ij}\left(\frac{\mb{x}_j-\mb{x}_i}{\|\mb{x}_j-\mb{x}_i\|}\right)-\bar{u}\lambda_2\sigma(\mb{p}_i).
\end{eqnarray}
where $\bar{u}>0$, $\sigma(\mb{y})=\mb{y}/\sqrt{(1+\|\mb{y}\|^2)}$, and
\begin{eqnarray}
W_{ij} := \frac{\bar{u}\lambda_1}{|\mc{N}_i|}\sigma(\|\mb{x}_i-\mb{x}_j\|-d_{ij})
\end{eqnarray}
with  $0<\lambda_1,\lambda_2<1$, $\lambda_1+\lambda_2=1$.
Note that the above formation protocol is a distance-based approach, where the final formation of agents depend on the rigidity of the neighboring  graph, and is also based on the distance $d_{ij}$ of every two agents $i$ and $j$ in forming the geometric shape.

\section{Simulation} \label{sec_sim}
For simulation we consider network of $10$ agents with random initial states in $[0,10]$. 
Assume the network is time-variant and switches every $0.4$ seconds between the graph topologies shown in Fig.\ref{fig_graph}.
\begin{figure} [hbpt!]
	\centering
	\includegraphics[width=3.45in]{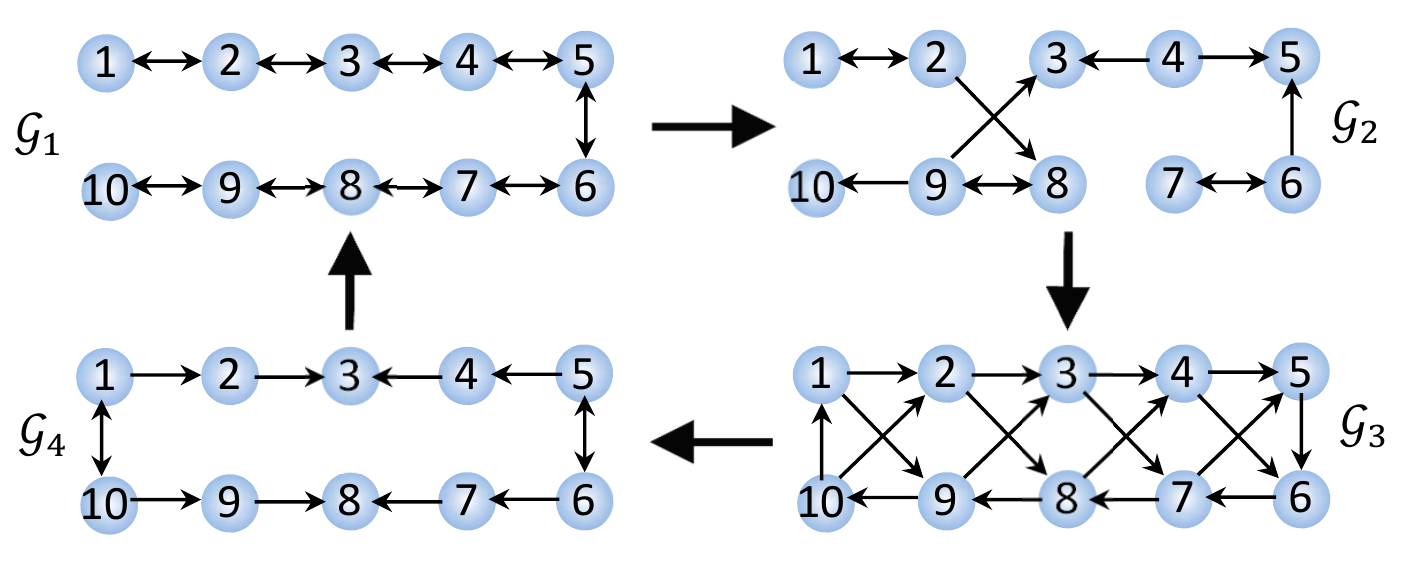}
	\caption{This figure shows the change in network topology of agents. The graphs $\mc{G}_1$ and $\mc{G}_3$ contain a spanning tree, while graphs $\mc{G}_2$ and $\mc{G}_4$ contain no spanning tree.}
	\label{fig_graph}
\end{figure}
As it can be seen from the figure, graph $\mc{G}_1$ is connected undirected and contains a spanning tree. Graph $\mc{G}_2$ is connected with no spanning tree.  $\mc{G}_3$ represents a strongly connected graph having a spanning tree. Finally, $\mc{G}_4$ contains no spanning tree as a subgraph. To check the conditions of Theorem~\ref{thm_switch} for consensus convergence, note that the combination of the network in time domain of every $0.8$ seconds contain a spanning tree. This implies that consensus can be reached, as it is shown in Fig.\ref{fig_x}. This figure shows that the difference of state values ($x_{\max} - x_{\min}$) is decreasing over time and all agents reach a consensus value.
\begin{figure}[hbpt!]
	\centering
	\includegraphics[width=3.6in]{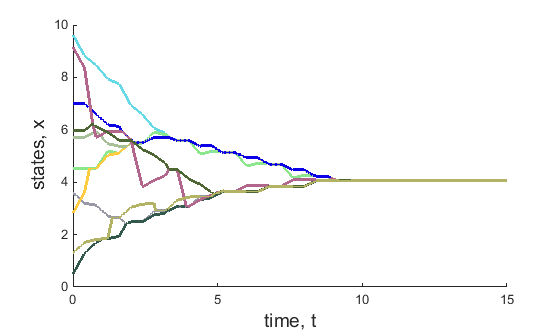}
	\caption{This figure shows the state-evolutions in time under the single-bit protocol \eqref{eq_cons} while agents' connectivity follows the switching topologies given in Fig.\ref{fig_graph}. As it can be seen $x_{max}-x_{min}$ is decreasing in time and the   consensus among agents is achieved in finite-time.}
	\label{fig_x}
\end{figure}

To compare our results, the time-evolution of the Lyapunov function \eqref{eq_V} for our proposed consensus protocol along with six other  consensus protocols from \cite{olfati_rev,bauso2006,wang2010finite,liu2015finite,zuo2014new} are shown in Fig.\ref{fig_V}. In this simulation $\alpha = 0.25$ for protocol in \cite{wang2010finite}, $\alpha = 0.5$ for  protocol in \cite{liu2015finite}, $\alpha = 0.8$, $\beta = 1.2$, $p = 3$, and $q = 5$ for protocol in \cite{zuo2014new}, $\alpha = 0.4$ for protocols in \cite{bauso2006}. All these protocols are evaluated over switching network topologies (Fig.~\ref{fig_graph}) with the same initial state values of agents. As it can be seen, the linear average consensus \cite{olfati_rev}, the geometric consensus \cite{bauso2006}, and the harmonic consensus \cite{bauso2006} all reach asymptotic stability while the convergence of the other four protocols are in finite-time.
The Lyapunov  function for our proposed protocol (and the six other protocols) is decreasing over time ($\dot{V} \leq 0$) which implies  Lyapunov stability.
\begin{figure}[hbpt!]
	\centering
	\includegraphics[width=3.6in] {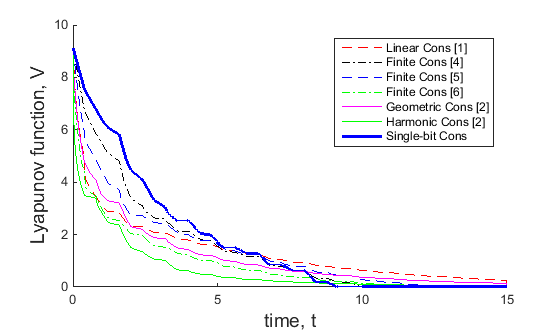}
	\caption{This figure shows the time evolution of Lyapunov function \eqref{eq_V} for seven consensus protocols: the blue solid line represents the proposed protocol (Lyapunov function of states in Fig.\ref{fig_x}) and other lines are for protocols in \cite{olfati_rev,bauso2006,wang2010finite,liu2015finite,zuo2014new}. The Lyapunov function is decreasing for all consensus protocols and reaches zero in finite-time for protocols \cite{wang2010finite,liu2015finite,zuo2014new} along with single-bit protocol and reaches zero asymptotically for protocols in \cite{olfati_rev,bauso2006}. }
	\label{fig_V}
\end{figure}

\section{Conclusions} \label{sec_conc}
Note that the protocol is based on computation and communication of single-bit of information and unit vector on state update. This makes the protocol more feasible in terms of
real-time applications, since it requires less computational load and information exchange among agents. The known protocols in the literature require the calculation of the exact relative state $x_j-x_i$ or a function  of it $f(x_j-x_i)$, however this protocol only requires the single-bit information on sign of $x_j-x_i$. In other words, the agent $i$ only needs to know for neighboring agent $j$ if $x_j<x_i$ or $x_j>x_i$. Similarly, for vector state consensus the agent $i$ only need to know the unit-vector in the direction of relative state vector, e.g. by use of omni-directional cameras in rendezvous problem as discussed in Section~\ref{sec_app_rend}. This is more real-time feasible in terms of computational complexity and communication load on agents.

The consensus protocol in this paper may be used to decentralize the information fusion in \cite{ciuonzo2017generalized,ciuonzo2017distributed}. In this case the information on detecting the existence or absence of the target may be shared by agents in their neighborhood via an undirected communication network and agents eventually average the received information via the proposed protocol and reach a consensus on detecting the target. However, this approach may result in performance degradation as compared to advanced GLRT or G-Rao fusion methods.

\bibliographystyle{IEEEbib}
\bibliography{bibliography}
\end{document}